\newcommand {\comment}[1]{}
\newcommand{\vc}{\mbox{${\bf c}$}}
\newcommand{\vx}{\mbox{${\bf x}$}}
\newcommand{\mA}{\hbox{{\bf A}}}
\newcommand{\mC}{\hbox{{\bf C}}}
\newcommand{\mE}{\hbox{{\bf E}}}
\newcommand{\mU}{\mbox{{$\bf U$}}}
\newcommand{\ga}{\alpha}
\def\bm#1{\mbox{\boldmath $#1$}}
\newcommand{\vga}{\mbox{$\bm \alpha$}}
\newcommand{\vgd}{\mbox{$\bm \delta$}}
\newcommand{\vgl}{\mbox{$\bm \lambda$}}
\newtheorem{theorem}{Theorem}[section]
\newtheorem{lemma}[theorem]{Lemma}
\newtheorem{prop}{Proposition}[section]
\newtheorem{claim}{Claim}[section]
\newtheorem{definition}{Definition}[section]
\newtheorem{question}{Question}[section]
\newtheorem{coro}{Corollary}[section]
\newcommand{\beq}{\begin{equation}}
\newcommand{\eeq}{\end{equation}}
\newcommand{\bea}{\begin{array}}
\newcommand{\ena}{\end{array}}
\newcommand{\bds}{\begin {itemize}}
\newcommand{\eds}{\end {itemize}}
\newcommand{\bdf}{\begin{definition}}
\newcommand{\blm}{\begin{lemma}}
\newcommand{\edf}{\end{definition}}
\newcommand{\elm}{\end{lemma}}
\newcommand{\bthm}{\begin{theorem}}
\newcommand{\ethm}{\end{theorem}}
\newcommand{\bprp}{\begin{prop}}
\newcommand{\eprp}{\end{prop}}
\newcommand{\bcl}{\begin{claim}}
\newcommand{\ecl}{\end{claim}}
\newcommand{\bcr}{\begin{coro}}
\newcommand{\ecr}{\end{coro}}
\newcommand{\bquest}{\begin{question}}
\newcommand{\equest}{\end{question}}
\title{On the allocation of  multiple divisible assets to players with different utilities}
\author{Ephraim Zehavi$^1$, Amir Leshem$^1$
\thanks{The authors are with the $^1$Faculty of Engineering, Bar-Ilan
University, Ramat-Gan, 52900, Israel.  e-mail: ephraim.zehavi@biu.ac.il. }}
\date{\today}
\begin{document}
\bibliographystyle{ieeetr}
\maketitle

\begin{abstract}
When there is  a dispute between players on how to divide multiple divisible assets, how should it be resolved? In this paper we introduce  a multi-asset game model that enables cooperation between multiple agents who bargain on sharing   $K$ assets,  when each player has a different value for each asset. It thus extends the sequential discrete  Raiffa solution  and the Talmud rule  solution to  multi-asset cases.

keyword: resource allocation, game theory, Raiffa Bargaining Solution, Aumann Bankruptcy, non-transferable commodities
\end{abstract}

\section{Introduction}
The study of bargaining between players who can benefit from cooperation dates  back to the beginnings of game theory \cite{nash50}.  Over  the years many different solutions to the bargaining problem have been proposed. A good  overview of  bargaining solutions and models  can be found in a volume by  \cite{osborne90} and the references therein.
In contrast the age-old problem of adjudication of conflicting claims has been dealt with by all societies probably since the dawn of civilization.  The contemporary study of the mathematical problem of resolving conflicting claims can be attributed to  \cite{oneill1982}, where he formulated the problem in a game theoretic manner and solved it using cooperative game theoretic techniques. In the last  thirty  years  there has been  extensive exploration of  the axiomatic bases of bargaining solutions and ways to resolve conflicting claims in bankruptcy  cases. An excellent recent overview can be found  in \cite{thomson2013, thomson2013a},  which extends Thomson's overview of older results  on the relationship between bargaining and the adjudication of conflicting claims \cite{thomson_2003,thomson_2012}.

 There are several alternative approaches to analyzing collaborative solutions. One approach is based on building an axiomatic structure that leads to  a single solution. This approach began with the Nash bargaining solution \cite{nash50,nash53} and includes the analysis of many other solutions;
e.g., the Raiffa solution \cite{raiffa1953}, the Kalai-Smorodinsky solution, \cite{kalai75} and the family of generalized Nash solutions, \cite{kalai77b}.
Other approaches emphasize the negotiation  process to reach a final agreement. \cite{salonen_1988} was the first to establish a step-by-step axiomatic definition of  the discrete Raiffa solution for the $N$-player bargaining problem, based on  four axioms.  \cite{livne_1989}, as well as  \cite{peters_1991} presented  characterizations of the  continuous Raiffa  solution. Recently,   \cite{trockel_2009} suggested viewing the discrete Raiffa  solution as a repetition of a process  based on  three standard axioms; namely (a) Pareto optimality (b) invariance to affine transformation, and (c) symmetry \cite{diskin_2011}  generalized the Raiffa solution to the case of multi players achieving interim settlements step-by-step. They  defined a family of discrete solutions for N-person bargaining problems  which approaches the continuous Raiffa solution as the step size gradually becomes smaller.  \cite{anbarc_2013} proposed a unified framework for characterizations of different axioms that lead to different bargaining solutions. Their solution was simplified by \cite{trockel2014a} who also filled in a gap in the proof. Recently, Trockel also proposed an alternative formulation for the discrete Raiffa solution based on non-transferable utility games. \cite{trockel2014b}
Another approach  is to define  a bargaining process that leads to a specific bargaining solution. \cite{myerson_1991},    \cite{tanimura_2008} and    \cite{trockel_2011}  proposed  a mechanism for reaching bargaining solutions in which two players are allowed to make a sequence of simultaneous  propositions and to converge to the discrete Raiffa solution.

The \cite{aumann_1985}  bankruptcy solution is based on an interpretation of two claim resolution scenarios  discussed in the Talmud. The first case  is the Contested Garment (CG) problem where two men disagree on the ownership of a garment\footnote{ Mishna Baba Metzia 2a: The first man claimed half of it belongs to him and the other claimed it all; the decision was that  the one who claimed half is awarded $1/4$ and the other is awarded $3/4$.  The principle is clear: the first man agrees  that half of the  garment does  not belong to him. Therefore, the  bargaining is only on half of the garment.}. The second case\footnote{Kethubot 93a: a man married three women. The first woman had a marriage contract of $100$, the second of $200$, and the third of $300$. The man dies and his estate is worth $E$.  The ruling of Rabbi Nathan was as follows: If the estate  is worth $E=100$, then the estate will divided equally, namely  $33\frac{1}{3}$ for each. If the estate is worth $200$ the division will be $(50,75,75)$ and if  it is worth $E=300$ the division is  $(50,100,150)$, respectively.} addresses the estate division problem among  three women.  \cite{aumann_1985} constructed two rules that generalize the Talmud rules and can be applied to resolve the bankruptcy problem. Later these rules were generalized by Moreno-Ternero and Villar who defined a family of rules termed TAL,  \cite{moreno2006}. \cite{thomson2008} extended the Talmud rules even further by considering a wider family of rules which he termed ICI and CIC.
Another line of research was taken by  \cite{dagan_1993}. They represented bankruptcy problems as bargaining problems. This enabled them to study the Nash bargaining and the Kalai Smorodinsky solutions as means of solving the bankruptcy problems. Specifically, they proved that the Nash bargaining solution induces the constrained equal award rule and the Kalai-Smorodinski induces the proportional rule.
In the other direction, \cite{quant2006} suggested that certain games; i.e., the class of compromise admissible games with transferable utility, can be considered as coalitional games  and  their solution was related to the run-to-the-bank (RTB) rule, showing that in certain cases bankruptcy solutions can be the basis for the solution to cooperative games. This leads directly to the query of whether the Raiffa solution can be described as an iterative application of a bankruptcy problem. This is indeed one of the goals of this paper.

An interesting generalization of the bankruptcy problem to multiple issues was first proposed by  \cite{calleja2005}. In their formulation each agent has multiple claims regarding the total assets, and each claim is related to a different issue. They defined a division problem with multiple issues as follows:
\bdf
\label{def1}
Let the set of agents be $I=\left\{1,...,N\right\}$. Each agent has a vector of claims $\vc_n=(c_{n1},...,c_{nK})$, regarding the issues $1,...,K$ and for all $n,k$ $c_{nk}\ge0$. A multiple issue
bankruptcy problem is given by the pair $\left(C,E\right)$ where $C=\left[\vc_1,...,\vc_N\right]^T$,  $E>0$ is the total value of  the assets that should be divided among the agents and
$c=\sum_{n=1}^N \sum_{k=1}^K c_{nk}>E$.
A vector $\vx=\left(x_1,...,x_N \right)$ is efficient if $\sum_{n=1}^N x_n=E$.
A division rule for a multi-issue claim problem is a function that assigns to each multi-dimensional claim problem $\left(C,E\right)$ an efficient vector $\vx=f\left(C,E\right)$.
\edf
They proposed a multi-dimensional extension of the run-to-the-bank rule in \cite{oneill1982} and showed that it coincides with the Shapley value for the generated coalitional game.
Based on the work of Calleja et al,  \cite{gonzalez2007} showed that the multi-dimensional run-to-the-bank rule always yields a core element, and that it satisfies self-duality.
\cite{ju2007} provides a good discussion of the different problems that can be represented as a multiple issue bankruptcy problem.
An extended discussion of the problem of multi-issue bankruptcy is presented in \cite{hinojosa2012} where it is shown that the theory of cooperative games provides an allocation rule consistent with the Talmud rule \cite{aumann_1985} in the case of two agents.  It is worth noting that the literature on multiple issue claim problems has only been concerned with allocating vectors on the face  of the $N$ dimensional simplex (the face containing the efficient allocation vectors). However, this does not cover the most general case of claim problems.
 When there are multiple assets, these are replaced by the total worth of all the assets. In many bankruptcy problems this is indeed desirable for operational simplicity of the bankruptcy problem. However, different types of assets can definitely have different value for  different agents. For example, the agents may be subject to different taxation laws, in which case they might prefer to have larger share of one type of asset  rather than others (for example, the taxation of property, equity in companies and cash differ significantly depending on the countries. In this sense, we assume that utility is not transferable between agents. This generalization  is the main focus of this paper.
In this case the differences between bankruptcy and general bargaining tend to decrease, since in both cases different agents have different utilities for each division of the assets. The non-transferable utility claim problem can now be formulated as follows:
 \bdf
 Assume that we have $N$ agents and a vector of assets $\mE=\left(E_1,...,E_K\right)$. We assume that each of the assets is divisible such that any agent can get a part of each asset.
 Agent $n, \ \  1 \le n \le N$ has claims $c_{n1},...,c_{nk}$ for each of the assets. Furthermore, each agent has a utility associated with each asset $u_{nk}$. A generalized claim problem is given by a triple $\left(\mE,\mC,\mU\right)$. The allocation matrix $\mA=(\ga_{nk}).$ The total utility for player $n$ is given by $\sum_{k=1}^K \ga_{nk} u_{nk}$. The allocation rule for a generalized claim problem is a function $f\left(\mE,\mC,\mU\right)=\mA$ where $\mA$ is a stochastic allocation matrix, i.e., $\sum_{n=1}^N \ga_{nk}=1$ for every $k$.
 \edf

This generalizes  Haake's model, \cite{haake_2009}, who studied the Perles-Maschler solution and the discrete Raiffa solution for two agents sharing $K$ divisible commodities with utilities that are linear in the share in each commodity.  Haake derived a procedure based on pricing. In an article on  frequency allocation problems \cite{leshem_2006,leshem_2008} discussed a more general resource allocation problem where the utilities are convex functions of the  weighted linear sum of the assets (a slightly more general model  than the generalized claim problem). 
 They provided an efficient algorithm for computing the NBS and showed that in the two agents case only a single commodity is shared and the computational complexity is $K\log (K)$.
 They also discussed the NBS in the general case of $N$ players and $K\geq 2$ commodities  \cite{leshem_2008} and showed that there is always a solution where at most ${N}\choose{2}$ commodities are shared and any other commodity is allocated to one of the agents. This was extended and similar results were shown to hold for the Generalized Nash solutions and the Kalai-Smorodinsky solutions (KSS) in  \cite{zehavi_2009}.
A related problem was  considered by  \cite{ponsati_1997} and by \cite{marmol_2007}  who  addressed  the problem of resolving global bargaining problems over a finite number of different issues. They defined  max-min and leximin global bargaining solutions. However, their rules allowed the agents to back away from agreements on previous issues if required. This was done through a comprehensive extension of the sum of the previously agreed points and the new point.
Interestingly, under quite general conditions, a recent result reported in  \cite{zehavi_2013} showed that the optimal solution for any {P}areto optimal solution can always be achieved by allocating all the utility related to each issue to one of the agents in all but $N-1$ issues. By extending the analysis of the pareto boundary by \cite{mjelde83} they showed that this is also valid for the (generalized) NBS and the KS solutions. Since under any issue-by-issue negotiation it is unlikely that for any given issue all the value related to an issue will be allocated to a single agent, issue-by-issue bargaining with comprehensive extension is limited. In fact  the comprehensive extension \cite{marmol_2007} conceals a renegotiation of agreements in the previous stages.

In this paper  we  extend both  the discrete Raiffa  bargaining  solution \cite{raiffa1953} and the Talmud rule (TR)  \cite{aumann_1985} for resolving the allocation of $K$ assets to $N$ agents when the utility of each player is a convex function of the linear sum of its utility for each asset.  In the discrete  Raiffa solution the players reach an agreement  step by step on an intermediate partition of the utility. However, if some utility is left over, all the players continue to solve the problem until Pareto optimality is achieved.  The Talmud rule  bankruptcy solution is based on an extension of a Talmudic approach involving two  individuals claiming a single garment to resolve a dispute between heirs.
The structure of the paper is as follows: In section \ref{sec:The model} we describe the model of the  bargaining game and define a unified notation. In section \ref{sec:section3} we discuss the generalizations of the Raiffa and the Talmud rule to $K$ assets. In section \ref{sec:section4} we discuss the properties of the solutions and provide some detailed examples. We conclude that global bargaining solutions can be obtained by solving a sequence of linear programming problems.

\section{The model and definitions}
\label{sec:The model}


An N-player multi assets bargaining game is described by the set of players ${\cal{N}}=\{1,...,N\}$, where each player has  access to ${\cal{K}}=\{1,...,K\}$ assets.  The utility of the $k$'th asset  to the $n$'th  player is $u_{nk}>0$, and the utility functions are additively separable across assets. The utility vector of player $n$ is $\bm{u_n}=(u_{n1},\cdots,u_{nK})$. The claims of the players are a vector $\bm{u}=(\|\bm{u_1}\|_1,\cdots), \|\bm{u_N}\|_1)$, where $\|(*)\|$ is $L^1$ norm of $(*)$.  The action of player $n$ is a vector ${\bm{\alpha_n}}=\left(\alpha_{n1}, \cdots,\alpha_{nK}\right)$, where $0\leq\alpha_{nk}\leq1$ and  $\|\bm{\alpha_n}\|_1 =1$.  Let ${\bm{d}}=(d_1,  \cdots, d_N)^T\in  \mathbb{R}^{N}$, be the  disagreement vector  of the players, where $d_n$ is the  disagreement utility  of the $n$'th player. 
The bargaining game problem is given by a triple $(\bm{S},\bm{d},\bm{u})$, where $\bm{S} \subset \mathbb{R}^{NK}$  is a compact, convex, and comprehensive set  of all possible results of the allocation, where any allocation matrix $\bm{A}=(\bm{\alpha_1},\cdots,\bm{\alpha_N})^T$ induces a point in $\bm{S}$. A bargaining game solution is a function  $\varphi(\bm{S,d,u})=A \in \mathbb{R}^{NR}$ that uniquely defines the allocation matrix. We denote by $\varphi_n(\bm{S,d,u})=\bm{\alpha_n} \in \mathbb{R}^{K}$ the allocation vector to player $n$.   The point $\bm{s}=(\bm{s_1},\cdots,\bm{s_n})^T\in\bm{S}$ is defined by the allocation matrix, where   $\bm{s_n}=(\alpha_{n1}u_{n1},\cdots,\alpha_{nK}u_{nK})$. The sum of the utilities allocated to player $n$  is $s_n={\bm{\alpha_n}}\bm{u_n}^T$.
The intended interpretation of the actions of the players  is as follows: the utility matrix $\bm{s}$ is the von Neumann-Morgenstern utility level attained by the  players through the choice of some joint action. The players can achieve $\bm{s}$ if they unanimously agree on an allocation matrix $\bm{A}$. If they do not agree on any point in $(\bm{S},\bm{d})$ they end up at $\bm{d}$. 
 \bdf
Any allocation matrix $\bm{A}$ induces a vector of total utilities  $\bm{v}=(s_1,\cdots,s_N)\in \mathbb{R}^N$. There are many to one mapping from the  set $\bm{S}\subset \mathbb{R}^{NK}$ to  the set $\bm{V}\subset \mathbb{R}^{N}$  of all feasible vectors $\bm{v}$. 
 \edf
 \bdf
 The individually rational part of \bm{S} is all the points $\bm{s}\in\bm{S}$ that provide a higher utility than the disagreement utility to all players, \\
 i.e $ \bm{S}_d=\left\{\bm{s}|\|\bm{s_n}\|_1\geq d_n, \forall n,  \bm{s}\in\bm{S}\right\}$.
 \edf

 Players agree to negotiate only if they can get more than their disagreement  point. Note  that in  a multi-asset game the interest of a player $n$ is to maximize the sum of the utilities, i.e. $\|s_n\|_1$. The way in which the assets are combined in the allocation makes no difference to the players as long as the selection results in maximum total utility. Solutions that allocate the same utility to each player are equivalent solutions. The objective of the game is to find the point on the {P}areto frontier of set $\bm{S_d}$. More formally we use the following definitions:
 
\bdf
Let $\bm{S}\subset \mathbb{R}^{NK}$ be a set. Then $\bm{s}\in \bm{S}$ is {P}areto
efficient if there is no $\bm{x}\in \bm{S}$ for which $\bm{x_n}>\bm{s_n}$ for all $n \in N$;  $\bm{s}\in \bm{S}$
is strongly {P}areto efficient if there is no $\bm{x}\in \bm{S}$ for
which $\bm{x_i}\geq \bm{s_i}$ $\forall i$, and $\bm{x_i}> \bm{s_i}$ for some $i \in N$. The
{P}areto frontier is defined as the set of all $\bm{s}\in \bm{S}$ that are
{P}areto efficient,  and is denoted by $\partial \bm{S}$.
\edf
\bdf
Let $\bm{S}\subset \mathbb{R}^{NK}$ be a set,  and  $\partial \bm{S}$ is the {P}areto frontier of the set. Then, $\bm{x}\in \bm{S}$ is $\epsilon$-{P}areto
efficient if there is  $\bm{s} \in \partial \bm{S}$ for which $\vert \bm{x_i}-\bm{s_i}\vert <\epsilon$  for all $i \in N$.
\edf

 
\bdf
{Restricted  Ideal point  (RIP)  rule}: For every bargaining  problem $(\bm{S, d,u})$, there  is an  ideal allocation  for player $n$, $\bm{A_n}=\{\bm{\alpha_p^{(n)}}\}_{p=1}^N$.  $\bm{A_n}$  is the  allocation matrix that maximizes the total utility of the  player $n$,  while maintaining the utility  $d_{p}$ for all other players $p\neq n$. $\bm{A_n}$ is located on the {P}areto frontier of  set $\bm{S}$.
The ideal allocation allocated to player $n$ is the  total utility $I_n(\bm{S, d,A_n})=s_n^{(n)}$. We denote by $I_n(\bm{S, d,A_n})=s_n^{(n)}$  the  ideal point of player $n$.  $\bm{A_n}$ is a solution of the following  linear programming problem:
\beq
\begin{array}{rl}
\label{RIPeq}
\textbf{max}      &  \bm{\alpha_n^{(n)}}\bm{u_n}^T \\
\textbf{subject to:} & \|\alpha_ {k}^{(n)}\|_1 = 1, \forall k, \quad \\
		    &\\
                     &  \alpha_ {nk}^{(n)} \geq0,  \forall n,k,\\
                     &\\
                     &  \bm{\alpha_{p}^{(n)}}\bm{u_p}^T= d_p, \forall p\neq n.,\\
                     &\\
                     &  \bm{\alpha_{n}^{(n)}}\bm{u_n}^T\geq d_n. 
\end{array}
\eeq
\edf
The vector  $I(\bm{S, d,A})=(I_1(\bm{S, d,A_1}),\cdot,I_N(\bm{S, d,A_n})\in\mathbb{R}^N$. 

 \bdf{ The Midpoint  (MP) Rule:}

 The midpoint is  the mapping  that maps the set of ideal  points $\{I_n(\bm{S, d,A_n})\}_{n=1}^N$ to a vector of feasible total utilities to all players.
  $\mu:   \mathbb{R}^N \rightarrow  \mathbb{R}^N$, and
 \beq
 \label{mu-eq}
 \mu(\bm{S},\bm{d},\bm{A}) \overset{def}=\bm{m}=(m_1,\cdots,m_N)
 \eeq
where  $m_n=\frac{1}{N}I_n(\bm{S},\bm{d,A_n})+(1-\frac{1}{N})d_n$ is the midpoint for player $n$.
   \edf
   
   
 \begin{lemma}
Let's assume a bargaining  problem ${\bm{(S, d, u)}}$, with the  ideal points $\{I_n{\bm{(S, d,A_n)}}\}_{n=1}^N$ and the set of allocations, $\{{\bm{A_n}}\}_{n=1}^N$.
Then, the midpoint for player $n$ is uniquely defined by the allocation matrix  $\hat{\bm{A}}=\frac{1}{N}\sum_{p=1}^N \bm{A_p}$,  $\hat{\bm{A}}=\{\hat{\alpha}_{nk}\}$, and is given by
 $ m_n=\sum_{k=1}^K \hat{\alpha}_{nk}u_{nk}$.
 \end{lemma}
 \proof
 The mid point for player $n$ is equal to   $m_n=\frac{1}{N}I_n(\bm{S}_d,\bm{d,A_n})+(1-\frac{1}{N})d_n$,
 \beq
 \begin{array}{rcl}
  m_n&=&\displaystyle\sum_{k=1}^K \hat{\alpha}_{nk}u_{nk}=\frac{1}{N}\sum_{p=1}^N \sum_{k=1}^K \alpha_{nk}^{(p)}u_{nk} \\
          &=&\displaystyle\frac{1}{N}\sum_{k=1}^K \alpha_{nk}^{(n)}u_{nk}+\frac{1}{N}\sum_{p=1, p\neq n}^N \sum_{k=1}^K\alpha_{nk}^{(p)}u_{nk}  \\
          &=&\frac{1}{N}I_n+\frac{N-1}{N}d_n\\
          \end{array}
  \eeq

The following rules are  associated with the bankruptcy  problem and will be applied later to the bargaining solutions.
\bdf {Constrained equal-awards  rule-CEA:}  A creditor $n \in \cal{N}$  with a claim $c_n$, will be awarded  $s_n=min\{c_n,\lambda\}=CEA_n(\bm{c},\bm{E})$,  where  $\lambda$ is  chosen such that $\sum{s_n}=E$ if $\sum_{n\in\cal{N}}c_n=E$.
\edf
In other words,  no creditor will be awarded more than  his debt.

\bdf{Constrained equal-losses rule-CEL:}
 A creditor $n \in \cal{N}$  with a claim $c_n$, will be awarded  $s_n=\max\{o,c_n-\lambda\}=CEL(\bm{c},\bm{E})$,  where  $\lambda$ is  chosen such that $\sum s_n=E$.
\edf

The CES rule  focuses on the losses creditors incur. No creditor will lose more than his debt.


\section{Extension to $K$ assets }
\label{sec:section3}

In this section we introduce two bargaining solutions for a multi-asset game under the constraint that the utility of an asset is not transferable. The objective of each players is to get the maximum total sum of the utilities.   The bargaining game is different than the bankruptcy game in several respects. First, each player has a different  utility for each asset. Second, if the bargaining process fails, each player can get some utility (the disagreement point). Therefore the bargaining between the players is only on surplus above what they can get by competition.  
 
\subsection{The Raiffa bargaining solution for a multi-asset Bargaining Game}
The Raiffa procedure is a step-by-step  process  where each step increases the utility of all players.  The Raiffa Bargaining Solution (RBS) for a $K$  assets bargaining game is based on iterations of the two rules MP and RIP.  The bargaining is done step by step, where agreement on  the current step becomes the point of disagreement for the next step.  In the initial step the midpoint is set to $\bm{m^{(0)}}=\bm{d}$.
Now,  for each step $j$ we  first apply the RIP rule $N$ times to find  the ideal  allocation  matrices for all players by solving $N$ times the optimization problem in (\ref{RIPeq}).  Then, we apply the MP rule to get the next midpoint vector, $\bm{m}^{(1}$, for the next step. We repeat these steps until the  distance from  the {P}areto frontier is arbitrarily small. The algorithm is shown in Table \ref{N_players_table}. For simplicity of notation, we use $I_n^{(j)}$ to denote  $I_n(\bm{S_d,d,A_n^{(j)}})$.

\begin{lemma}
The above procedure converges to a  $\epsilon$-{P}areto  optimal solution.
\end{lemma}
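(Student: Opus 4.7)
The plan is to establish (i) monotone convergence of $\{\bm{m}^j\}$ coordinate-wise, (ii) that the limit is a fixed point of the restricted-ideal-point map, hence lies on $\partial\bm{S}$, and (iii) a direct estimate showing that the stopping criterion $\Delta\le\epsilon$ certifies $\epsilon$-Pareto optimality in the paper's sense.

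First I would argue that $\{\bm{m}^j\}$ is coordinate-wise non-decreasing. At every step $j$ and every player $n$, the current midpoint vector $\bm{m}^j$ is itself feasible in the linear program (\ref{RIPeq}) defining $I_n^j$: take the allocation matrix currently realising $\bm{m}^j$, which satisfies the equality constraints $\sum_k \alpha_{pk}u_{pk}=m_p^j$ for $p\neq n$ and the inequality $\sum_k \alpha_{nk}u_{nk}\ge m_n^j$ with equality. Hence $I_n^j \ge m_n^j$, and the midpoint update $m_n^{j+1}=\tfrac{1}{N}I_n^j+\tfrac{N-1}{N}m_n^j$ gives $m_n^{j+1}\ge m_n^j$. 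By the preceding lemma, the averaged allocation matrix $\bar{\bm A}^j=\tfrac{1}{N}\sum_p \bm A^{(p),j}$ realises $\bm{m}^{j+1}$, and convexity of $\bm{S}$ keeps $\bm{m}^{j+1}\in\bm{S}$. Boundedness of $\bm{S}$ then yields $m_n^j\to m_n^\ast$ for each $n$, and the recursion forces $I_n^j-m_n^j = N(m_n^{j+1}-m_n^j)\to 0$.

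Since $I_n(\bm{S},\cdot)$ is the value of a parametric linear program whose feasible set is non-empty and bounded for every threat point in $\bm{S}$ (witnessed by the threat point itself), standard LP sensitivity gives continuity in the threat point, and passing to the limit yields $I_n(\bm{S},\bm{m}^\ast)=m_n^\ast$ for every $n$. I would then argue that this fixed-point identity forces $\bm{m}^\ast\in\partial\bm{S}$: if some $\bm{x}\in\bm{S}$ strictly dominated $\bm{m}^\ast$ in every coordinate, then by comprehensiveness the vector obtained by replacing only the $n$-th coordinate of $\bm{m}^\ast$ with $x_n$ would still lie in $\bm{S}$, contradicting $I_n(\bm{S},\bm{m}^\ast)=m_n^\ast$.

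For the quantitative $\epsilon$-PO statement, fix any iterate $\bm{m}^j$ satisfying $\Delta\le\epsilon$, so that $|I_n^j-m_n^j|\le\epsilon/N$ for every $n$. Choose a Pareto point $\bm{s}^\ast\in\partial\bm{S}$ weakly dominating $\bm{m}^j$ (obtained by maximising $\sum_n u_n$ on $\{\bm u\in\bm{S}:u_n\ge m_n^j\}$, which is non-empty and compact). For each $n$, comprehensiveness places $(m_1^j,\ldots,s_n^\ast,\ldots,m_N^j)$ in $\bm{S}$, so $s_n^\ast\le I_n^j$, giving $0\le s_n^\ast-m_n^j\le\epsilon/N<\epsilon$, which is exactly the paper's definition of $\epsilon$-Pareto optimality. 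The most delicate step is the LP continuity in the right-hand-side data; it is standard given compactness of $\bm{S}$ and the fact that $\bm{m}^j$ itself witnesses feasibility at every iteration, but it is the only place where the linear structure of (\ref{RIPeq}) is genuinely used, so it deserves explicit justification.
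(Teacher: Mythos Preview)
Your argument is correct and in fact more complete than the paper's. The paper's own proof is much shorter: it simply \emph{assumes} the stopping condition $\max_n|I_n^j-m_n|\le\epsilon$ is reached, and then exhibits the single point $\bm{I}_s=(I_1^j,m_2,\ldots,m_N)$ as a witness on $\partial\bm{S}$. Since $I_1^j$ is, by construction of the RIP linear program, the maximal first coordinate subject to the other coordinates being fixed at $m_2,\ldots,m_N$, this point is Pareto efficient (strict domination would, via comprehensiveness, contradict maximality of $I_1^j$); and it differs from $\bm m$ only in the first coordinate, by at most $\epsilon$. That is the entire proof in the paper.

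Compared with this, your route differs in two respects. First, your parts (i)--(ii) actually establish that the iteration converges and hence that the stopping criterion is eventually met; the paper's proof is silent on this and simply presumes termination, so your monotonicity-plus-boundedness argument genuinely adds content (and the later Lemma~3.3 on the contraction factor $(1-1/N)^N$ gives the quantitative version). Second, for the $\epsilon$-PO certification you take a small detour: you manufacture a dominating Pareto point $\bm s^\ast$ and then use comprehensiveness coordinate by coordinate to squeeze $s_n^\ast\le I_n^j$. The paper avoids this by noticing that the ideal-point vector $(I_1^j,m_2,\ldots,m_N)$ already sits on $\partial\bm S$, so no auxiliary optimisation is needed. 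Your approach is sound but slightly heavier; the LP continuity step you flag as ``delicate'' is in fact unnecessary for the $\epsilon$-PO conclusion once you adopt the paper's direct witness, though it is needed for your part (ii).
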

\begin{proof}

Assuming that the procedure stops after $j$ steps, and  $ \max_n \vert I_n^j-m_n\vert \leq  \epsilon$, with the final allocation matrix  is $\hat{\bm{A}}=\frac{1}{N}\sum_{p=1}^N \bm{A_p}$. Then the final utility for player $n$ is $m_n$, according to (\ref{RIPeq}). Let $\bm{I_1(\bm{S,d,A_1})}=\{I_1^j, m_2, ...,m_N\}$ be  a point on $\partial \bm{S}$. We have to prove that the point $\bm{m}$ is at a distance of less than $\epsilon$ from the {P}areto frontier. The distance of $\bm{m}$ from the {P}areto frontier is bounded by
\[\| \bm{I_1(\bm{S,d,A})}-\bm{m} \|_1 \leq  \max_n \| I_n^j-m_n \|\leq  \epsilon. \]
 Therefore, the procedure converges to a  $\epsilon$-{P}areto  optimal solution.
\end{proof}

 \textit{Example I}: Assume that we have two players and three assets. The  utility vectors  for player $1$  and $2$ are   $\bm{u_1}=(20,20,30)$, and   $\bm{u_2}=(100, 50,10)$, respectively. The point of disagreement is  $\bm{d}=(0,0)$.  In the first step the ideal points for player $1$ and player $2$ are $70$ and $160$  and  are achieved by the allocation matrices  $\bm{A_1^{(1)}}$ and $\bm{A_2^{(1)}}$, respectively. 
\[ \bm{A_1^{(1)}}=\left(
			\begin{array}{ccc}
				1  & 1  & 1  \\
				 0 &   0&   0 
				\end{array}
\right), 
\bm{A_2^{(1)}}=
\left(
\begin{array}{ccc}
 0 &   0&   0\\
 1  & 1  & 1  
\end{array}
\right).\]
Using the MP rule, the midpoint  of the players is $\bm{m_1}=(35,80)$ with the allocation matrix 
\[\bm{\hat{A}^{(1)}}=
\left(
\begin{array}{ccc}
 .5 & .5&   .5\\
 .5 & .5   & .5  
\end{array}
\right).
\]
In the second step we set the disagreement point to $\bm{d}=\bm{m_1}$ and using the RIP rule to compute the next ideal points for each player.
The ideal points for player $1$ and player $2$ are $54$ and $137.5$, respectively. The  allocation matrices  $\bm{A_1^{(2)}}$ and $\bm{A_2^{(2)}}$  are
\[ \bm{A_1^{(2)}}=\left(
			\begin{array}{ccc}
				0.2  & 1  & 1  \\
				 0.8 &   0&   0 
				\end{array}
\right), 
\bm{A_2^{(2)}}=
\left(
\begin{array}{ccc}
 0 &   0.25&   1\\
 1  & 0.75  & 0  
\end{array}
\right).\]
Applying again the MP rule we obtain that the midpoint  of the players is $\bm{m_1}=(44.5,108.75)$ with the allocation matrix 
\[\bm{\hat{A}^{(2)}}=
\left(
\begin{array}{ccc}
 .1& .625& 1\\
 .9& .375   & 0  
\end{array}
\right).
\]
In the third step we set the disagreement point to $\bm{d}=\bm{m_2}$ and using the RIP rule to compute the next ideal points for each player.
The ideal points for player $1$ and player $2$ are  is $54$ and $137.5$, respectively. The allocation matrices  $\bm{A_1^{(3)}}$ and $\bm{A_2^{(3)}}$ are
\[ \bm{A_1^{(3)}}=\left(
			\begin{array}{ccc}
				0  & .825  & 1  \\
				 1 &   0.175&   0 
				\end{array}
\right), 
\bm{A_2^{(3)}}=
\left(
\begin{array}{ccc}
 0 &   0.725&   1\\
 1  & 0.275  & 0  
\end{array}
\right).\]
The midpoint  of the players is $\bm{m_3}=(45.5,111.25)$ with the allocation matrix 
\[\bm{\hat{A}^{(3)}}=
\left(
\begin{array}{ccc}
0& .775& 1\\
 1& .225   & 0  
\end{array}
\right).
\]

The midpoint  $\bm{m_3}$ is on the {P}areto frontier, and therefore the allocation process ends. The achievable utility  region between the players are inside  set $\bm{S}$ as depicted in Figure  \ref{raiffaN_fig}. The final agreement is reached after three steps. There may be  multiple options in the utility space that provide the same utility in the intermediate steps for sharing the multi-assets.  However, the final allocation is unique.

\begin{table}
\caption{Raiffa bargaining solution procedure  for the multi-asset  Resource Allocation problem \label{N_players_table}}{
\begin{tabular}{||l||}
\hline \hline {\bf Initialization:}  \\
Set $\bm{m^{(0)}}=\left\{d_1, \cdots, d_n\right\}, j=0.$ \\
$\Delta =K, \forall n\in\{0,...,N\}$. \\
$ Set  \ \epsilon=10^{-4}$. \\
\hline \hline {\bm{Computation:}}  \\
While  $\Delta   \geq \epsilon$  \\
\quad Set $j=j+1. $\\
\quad \quad  for n=1:N \\
\quad \quad \quad Find  the ideal point, $ I_n^j $ according to the LP in  (\ref{RIPeq}). \\
\quad \quad \quad Set the initial  midpoint  for player $n$: \\
\quad \quad \quad $m^{j+1}=I_n^j /N+(1-\frac{1}{N})d_n$.\\
\quad \quad \quad $m_n=m_n^j$ \\
\quad \quad end  \\
\quad Set: $\Delta = N \max_n\lVert I_n^j-m_n \rVert_2$ \\
End \\
\hline
\hline
\quad  \quad Allocate to player $n$ the assets according to $\alpha$'s, $\{\alpha_{nk}^j\}$, \\
\quad  \quad and the  final utility of player $n$ is $u_n=\sum_{ 1}^K\alpha_{nk}^ju_{nk}=m_n$.\\
End \\
\hline
\hline
\end{tabular}}
\end{table}

\begin{figure}[htbp]
\centering \epsfig{file=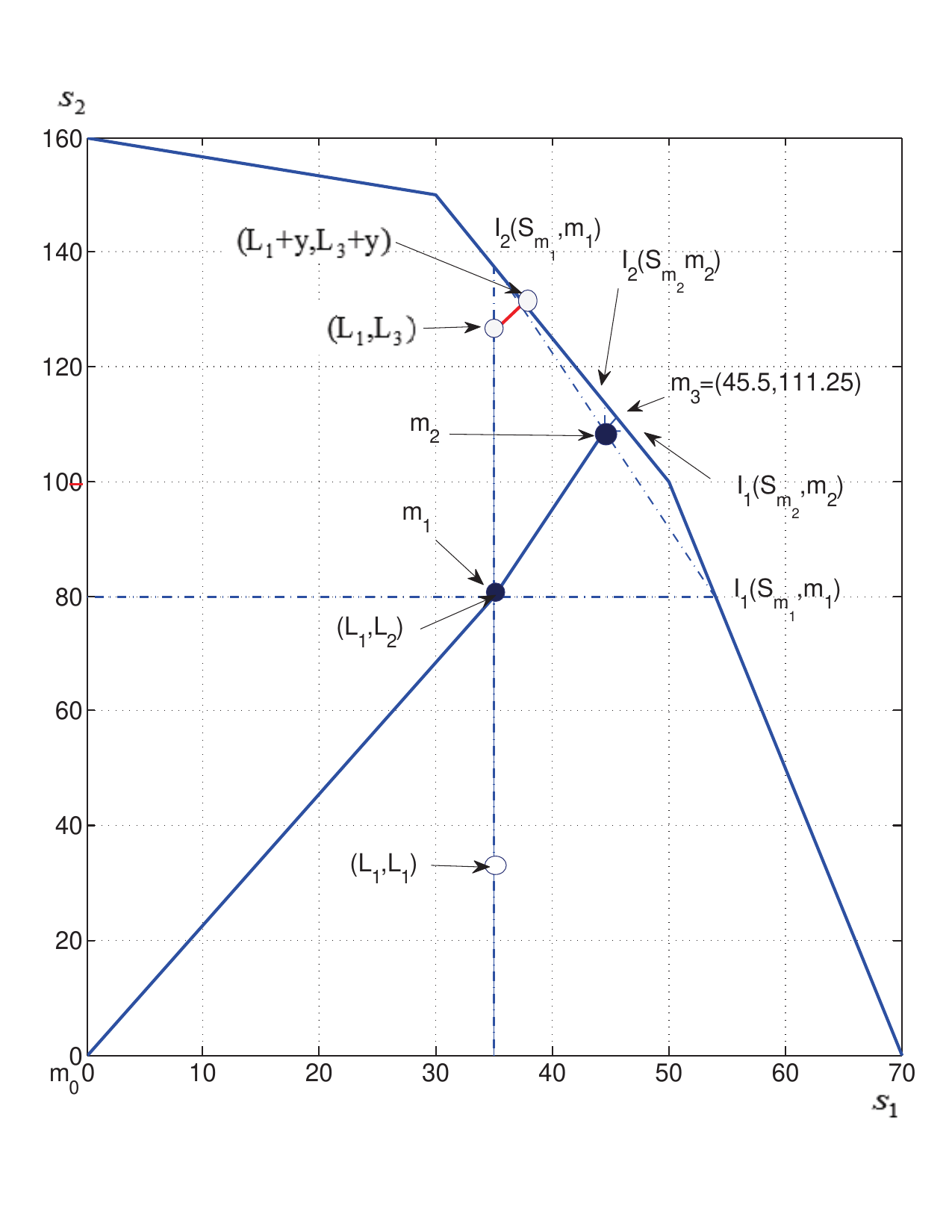,width=0.45\textwidth}
\caption{The Raiffa and Talmud rule solutions for the 2-player bargaining game.}
\label{raiffaN_fig}
\end{figure}

\begin{table}
\caption{Raiffa Solution and TR solution for 3 Players  and 3 Assets \label{3_players_example}}{
\begin{tabular}{||c|c|c|c|c||}
\hline
\hline
\hline
Initial Status	&A 	& B & C		&Ideal Utility \\
\hline
Player 1	&20 	 &20	   &30  	&70\\
\hline
Player 2	&100 &50	  &10		&160\\
\hline
\hline
First Run	&k=1 	& k=2	& k=3		&Ideal Utility \\
\hline
Player 1	&0	&5		&30      &54\\
\hline
Player 2	&80 	&0		&0        &137.5 \\
\hline
\hline
Second Run	&A 	& B & C		&Ideal Utility \\
\hline
Player 1	&0	&14.5		&30      54\\
\hline
Player 2	&100 	&8.75		&0        137.5 \\
\hline
\hline
Third  Run	&A 	& B & C		&Final l Utility \\
\hline
Player 1	&0	&15.5		&30      &45.5\\
\hline
Player 2	&100 	&11.25		&0        &111.25 \\
\hline
\hline
Talmud rule  &A 	& B & C	&Final Utility \\
\hline
Player 1	&0	&8.6	&30	&38.6\\
\hline
Player 2	&100	&28.6	&0	&128.6\\
\hline
\hline
\end{tabular}}
\end{table}
\begin{lemma}
On every step of the RBS the unallocated utility is reduced by a factor of $(1-\frac{1}{N})^N>e^{-1}$.
\end{lemma}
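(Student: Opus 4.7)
The plan is to work at the level of each player's per-player residual utility $r_n^j := I_n^j - m_n^j$, where $I_n^j$ is the restricted-ideal point produced by the LP~(\ref{RIPeq}) at step $j$ and $m_n^j$ is the current midpoint (the disagreement input to step $j+1$). I would interpret ``unallocated utility'' as this residual gap, aggregated across players in whichever natural way (sum, max, or product) is most convenient; the geometric contraction will hold per coordinate.

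First I would substitute the midpoint update rule $m_n^{j+1}=\tfrac{1}{N}I_n^j+(1-\tfrac{1}{N})m_n^j$ into the gap relative to the \emph{same} ideal $I_n^j$, which gives the one-line identity
\begin{equation*}
I_n^j - m_n^{j+1} \;=\; \bigl(1-\tfrac{1}{N}\bigr)\bigl(I_n^j - m_n^j\bigr) \;=\; \bigl(1-\tfrac{1}{N}\bigr)\,r_n^j .
\end{equation*}
Then I would establish the monotonicity $I_n^{j+1}\le I_n^j$: since $m_n^{j+1}$ is a convex combination of $m_n^j$ and $I_n^j\ge m_n^j$, the disagreement vector weakly increases, $\bm{m}^{j+1}\ge \bm{m}^j$; the equality constraints $u_p=m_p^{j}$ in~(\ref{RIPeq}) are thereby replaced by the stricter constraints $u_p=m_p^{j+1}$, so the feasible simplex of allocation matrices shrinks and the LP maximum cannot increase. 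Combining, I get the coordinatewise contraction $r_n^{j+1}\le (1-\tfrac{1}{N})\,r_n^j$.

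Iterating this inequality over one round in which all $N$ players are updated (which is what the algorithm in Table~\ref{N_players_table} does in one outer iteration) yields $r_n^{\text{round}+1}\le (1-\tfrac{1}{N})^N\,r_n^{\text{round}}$ when the reductions are composed across the $N$ sequential LP solves, equivalently $\prod_{n=1}^{N} r_n^{j+1}\le (1-\tfrac{1}{N})^N\prod_{n=1}^{N} r_n^{j}$ when viewed simultaneously. The comparison with $e^{-1}$ then follows from the elementary inequality $\ln(1-x)\le -x$ on $(0,1)$: setting $x=1/N$ and exponentiating gives $(1-\tfrac{1}{N})^N\le e^{-1}$. (The natural direction of the bound is $\le e^{-1}$, i.e.\ the contraction is at least as fast as $1/e$; the $> e^{-1}$ written in the statement appears to be a typographical inversion, since the sequence $(1-1/N)^N$ approaches $e^{-1}$ monotonically from below.)

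The main obstacle, beyond the routine substitution in the midpoint formula, is the monotonicity $I_n^{j+1}\le I_n^j$ in the multi-commodity setting. In the single-commodity case with a convex, comprehensive set $\bm{S}_{\bm{d}}$ it is geometrically transparent, but under the LP formulation~(\ref{RIPeq}) it must be read off from primal feasibility: strengthening the equality constraints $u_p=m_p^j\mapsto u_p=m_p^{j+1}$ prunes the admissible set of allocation matrices, so the maximum over a smaller set is no larger. Once this lemma is in place, the rest of the proof is the direct algebra above together with the standard comparison with $1/e$.
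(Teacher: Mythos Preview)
Your core computation---substituting the midpoint rule to obtain $I_n^j - m_n^{j+1} = (1-\tfrac{1}{N})(I_n^j - m_n^j)$ per coordinate, and reading the factor $(1-\tfrac{1}{N})^N$ as the product across the $N$ players---is exactly the paper's argument, which is a three-line proof working in the additive utility space $\mathbb{R}^N$. You are also right that $(1-\tfrac{1}{N})^N \le e^{-1}$, so the inequality in the lemma statement is inverted.

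Where your write-up goes beyond the paper it stumbles. Your justification for $I_n^{j+1}\le I_n^j$ is that replacing the equality constraints $u_p=m_p^j$ by $u_p=m_p^{j+1}$ ``shrinks the feasible simplex.'' That is false: two equality slices of the allocation simplex at different right-hand sides are not nested, so nothing about LP feasibility gives the inequality directly. The conclusion is true, but the argument has to use the allocation structure: from an allocation achieving $(m_{-n}^{j+1},I_n^{j+1})$ one can transfer shares of commodities from each player $p\neq n$ to player $n$ until $u_p=m_p^j$, which only raises $u_n$, exhibiting a feasible point for the step-$j$ LP with $u_n\ge I_n^{j+1}$ and hence $I_n^j\ge I_n^{j+1}$. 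The paper sidesteps this entirely by measuring the residual against the \emph{old} ideal $I_n^j$, for which the contraction is an exact identity and no monotonicity is needed. Finally, your first reading of the exponent $N$---as $N$ sequential $(1-\tfrac{1}{N})$ reductions composed within one round of LP solves---does not match the algorithm in Table~\ref{N_players_table}, which performs a single midpoint update per outer iteration; only your product-across-players reading is the correct one, and that is what the paper uses.
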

\begin{proof}

Any allocation of assets by a matrix $A$ can be mapped as a point in the  utility space  $ {\bm{S}}$. A new disagreement  point
reduces the distance between the ideal point and the previous disagreement point by a factor of $(1-\frac{1}{N})$, in every coordinate. Therefore, in every step of the RBS the unallocated utility is reduced by a factor of $(1-\frac{1}{N})^N$.
\end{proof}
The Raiffa solution is obtained by solving a series of linear programming problems. The KKT conditions  and the properties of the solution for the case of two players is given in Appendix A. Several comments are in order:

\begin{itemize}
\item A sufficient condition for the existence  solution to the bargaining problem is  that $d_n<\frac{u_n}{N}, \forall N$.  This is because, the allocation of $1/N$  of every  asset  to a player will provide him with a utility greater than what he can get by disagreement. 
\item Set $\bm{S}$ is constructed by a finite number of intersections of hyper-planes. Every intermediate disagreement reduces the number of hyper-planes that define the {P}areto frontiers of the set. Therefore,  the final bargaining result is achieved in finite steps, if  the Raiffa solution is not on the intersection line of two hyper-planes.
\item The linear programming problem for two players can be solved with a complexity of $O(Klog(K))$ (see Appendix A), and the number of assets that are shared by more than one player is at most one.
\end{itemize}

 We now show the relationship between the Raiffa bargaining solution and the CEA rule. Let us  define the modified CEA rule for the  bargaining problem as follows:
\bdf
Let the ideal point of player $n$ be $I_n(\bm{S_d},\bm{d,A})$, and $\partial \bm{S_d}$ is the {P}areto frontier of the  set $\bm{S_d}$; then the modified constraint equal awards of player $n$ is $CEA_n(\frac{I_n(\bm{S_d},\bm{d,A})-\bm{d}}{N}+\bm{d}, \partial \bm{S_d})$
\edf

The Raiffa bargaining solution can be viewed as a repetition of the modified CEA rule. Step $j$ in the Raiffa  bargaining solution can be obtained by applying  the modified CEA rule to the remainder of the asset, namely;  
\beq
 m_{n_{j+1}}=m_{n_j}+CEA_n(\frac{I_n(\bm{S_{m_{j}}},\bm{m_{j},A)})-\bm{m_{j}}}{N}+\bm{m_{j}}, \partial \bm{S_{m_{j}}}).
 \eeq

This interpretation of the Raiffa bargaining solution resembles Piniles' rule, \cite{Piniles61}. Here, we applied the CEA rule N times to take  one step in the Raiffa  bargaining solution.

\subsection{Extension of the Talmud rule bargaining solution to the  multi-asset  game}

 \cite{aumann_1985} considered the problem of the division of a property  $E$, when the creditors have debts $c_1, ..., c_n$, that  are worth more than  $E$.  They proposed  allocating the  property according to an extension of the  Contested Garment case , which is also known as the Concede-and-Divide rule, as follows:

\bdf{Concede-and-Divide, (CD) Rule for two players:}

Two creditors have claims $c_1$ and $c_2$ on a  property   $E$. The amount that  creditor $i$ will be awarded is
\[s_i=\frac{E- (E - c_1)_{+}-(E - c_2)_{+}}{2}+(E - c_{3-i})_{+}, i=1,2, \] where, $(x)_{+}=\max(x,0)$. We denote this division as $CD(c_1,c_2,E)=(s_1,s_2).$
\edf
Any two creditors $i$ and $j$ will be awarded $s_i$ and $s_{j}$, such that  $CD(c_i,c_j,E_{i,j})=(s_i,s_j), E_{i,j}=s_i+s_j, \forall i, j$ and $\sum_{n=1}^Ns_n=E$.
Assume that the debts are ranked in increasing order: $c_1\leq,..., \leq c_n$, and $C=\sum_{n=1}^N c_n$. Aumann  proposed the following allocation called the Talmud rule {(TR)}: each creditor  will get  $s_n$, where
\beq
s_n=\left\{
\bea{ll}
\min\left\{ \frac{c_n}{2},\lambda\right\} =CEA_n(\bm{c}/2,E))&   E \leq \frac{1}{2}C\\
\max\left\{ \frac{c_n}{2},c_n -\mu \right\}=CEL_n(\bm{c},E)& E > \frac{1}{2}C \\,
 \ena \right.
 \eeq
 and $\lambda$ and $\mu$ are chosen to satisfy the constraint $\sum_{n\in\cal{N}}c_n=E$.


The Talmud\footnote{The rules in the TAL-family \cite{moreno2006} extends  the Talumd  rule by using a parameter of $\theta\in\left[0,1\right]$; i.e., nobody gets more than a fraction $\theta$ of his claim if the asset value to divide is less than $\theta$  times of the aggregate claim,  and nobody gets less than a fraction  $\theta$ of his claim if the asset value to divide is  larger than  $\theta$  times the aggregate claim.
The extension of the Talmud rule to the multi-asset bargaining problem can be easily extended to the TAL-family rules.} rule can be interpreted  as a composition of the constrained equal awards rule and the constrained equal losses rule. In the Talmud rule no creditor gets more than half of his claim if the asset value is less than half of the aggregate claim and nobody gets less than half of his claim if the asset value  exceeds half of the aggregate claim.
 
We now modify the bankruptcy solution and apply it to the bargaining problem. Let us adopt the following modification:
\begin{itemize}
\item  The  total utility  that  a player $n$  claims is  $I_n(\bm{S}_d,\bm{d},\bm{A_n})$,  and without an agreement he gets $d_n$. Therefore,  the negotiation is only on the  surplus $c_n=I_n(\bm{S}_d,\bm{A})-d_n$.  For simplicity of notation, we use $I_n$ to denote  $I_n(\bm{S_d,d,A_n})$, and $D$ for $\sum_{n=1}^Nd_n$.
\item The value of $I_n$ is obtained by  solving  the optimization problem in (\ref{RIPeq}). 
\item  In contrast to the bankruptcy problem, there is no single asset with a value of $E$ that has to be divided between the players. Here, each player claims the all assets. The solution will be on the {P}areto frontier of set $\bm{S}$.  Therefore, the CD rule for the bargaining problem between players $i$ and $j$ is  $CD(c_i, c_j, E_{i,j})=(s_i, s_j)$,  $E_{i,j}=s_i+s_j$, and  $\bm{s}  \in {\partial \bm{S}}$.
\end{itemize}

The extension of the Talmud rule solution to the multi-asset case  is based on a binary search of a {P}areto optimal allocation that satisfies the CD rule  and the RIP rule.   The CD rule defines $2N+1$ levels, $L_n$ (see Figure \ref{aumann1_fig}), where each level corresponds to  a point in $\mathbb{R}^N$, that can be either inside the set $\bm{V}\subset \mathbb{R}^N $ or outside the set. The bargaining solution has to be on the {P}areto frontier of the total utility space, and defines a unique water level $L$. The 
\[ L_n=\left(
			\begin{array}{cc}
				0 & n=0\\
				\frac{c_n}{2}  & 1 \leq p\leq  N   \\
				\end{array}
\right)
;  L_{2N-p}=\left(
			\begin{array}{cc}		
				c_N+d_N-\frac{c_n}{2}  & 1\leq p <N   \\
				c_N+d_N    & p=0
				\end{array}
\right) \]

We  use the  \cite{kaminski_2000} water-filling  interpretation to describe the algorithm.  Figure \ref{aumann1_fig}  depicts $N$ containers of differing sizes, representing the claims of the players, into every one of which we pour water representing the utility. A  container representing the claim of a player $n$  is divided into two halves connected by a narrow tube that allows the water to run through it, but  with almost zero capacity. All the containers are connected at level $L_0=0$  by a tube that likewise is very narrow but allows the fluid to pass between containers according to the law of communicating containers. All containers are at the same height above the ground, $L_0$, and width and have a different tube. The container with the smallest capacity has the longest tube. The lower part of the container has a capacity that is equal to half of the claim of  player-$L_n=c_n/2, n \leq N$ above level $L_0$,  plus what he can get by competition-$d_n$. The upper part  has a capacity equal to half of the claim of the player.  Thus, the capacity of a container represents the player's claim plus $d_n$.  The containers are ranked according to their claims.

We now pour water into  all the containers. If the extra  utility to be shared is between $D$ and $D+N\cdot L_1$, all the containers (players)  share the water (extra utility) equally, and the water level in all the containers is  the same according to the law of communicating containers.  If the extra utility is greater than $D+N\cdot L_1$,  the container (player) with the smallest (volume) claim stops receiving anything for a while, and the water  is divided equally among all the other  containers until each container has an amount equal to the second smallest half-claim $L_2 $ plus $d_n$.  This process continues as follows:  whenever the water level is above $L_p$ player $p$ stops receiving anything, while the rest of the players share the water (extra utility) equally. 
Therefore, whenever the extra utility to be shared by the players is smaller than  the half-sum of the
claims plus $D$; i.e., $\sum_{n=1}^N L_n+d_n$,  each player receives at most his half claim according  to the constrained equal awards rule.

When the extra utility  exceeds half the sum of the claims, the calculation is made in accordance with each player's  losses:  the difference between the player's claim $\frac{c_n}{2}$ and what he actually  gets is $b_n$. Now, if the water level is between $L_{2N-p+1}$, and $L_{2N-p},  1\leq p<N$, the  water is shared equally between the $p$'th  container and $N$'th  container   according  to constrained equal losses rule.

The algorithm consists of several steps (see Table \ref{Aumann2}). In the first step we need to find what rule to apply: the Constraint Equal-Awards (CEA) rule or  the Constraint Equal-Losses  (CEL) rule. This can be resolved by determining whether there is a feasible allocation if the water level is above $L_N$. If so,  the CEL rule  is applied; otherwise the CEA rule.  The decision is made by solving the  following  linear programming problem:
\begin{equation}
\label{eq2}
\begin{array}{rl}
\textbf{min}       &\sum_{n=1}^N\sum_{k=1}^K\alpha_{nk}\\
\textbf{subject to:} & \forall k \sum_{n=1}^{N}\alpha_ {nk} \leq 1,\quad \\
                     & \forall n,k \ \alpha_ {nk} \geq0, \\
                     & \forall n \sum_{k=1}^K\alpha_{nk}u_{nk}= \frac{c_n}{2}+d_n
\end{array}.
\end{equation}
If there is a result  (the result is in set $\bm{S^{NK}}$),  this implies that the water level $L$ is above $L_N$ and the allocation is according  to the CEL rule; otherwise the water level is below $L_N$ and the allocation is according to the CEA rule. We now explore these two cases.
\textbf{Case A: The CEA rule}

All the players can gain at most  $\frac{c_n}{2}+d_n$.  Let $p$ the larger number  such that
$\left\{ a_1,...,a_N\right\} \in \bm{S^{N}}$, and
\beq
\label{Aterms}
a_n= \left\{
 \bea{ll}
\frac{c_n}{2}+d_n &  n \leq p \\
\frac{c_p}{2}+d_p & p< n \leq N\\
 \ena \right. .  \\
 \eeq
This problem can be formulated as the following linear programming problem;
\begin{equation}
\label{Aequation}
\begin{array}{rl}
\textbf{min}       &\sum_{n=1}^N\sum_{k=1}^K\alpha_{nk}\\
\textbf{subject to:} &\sum_{n=1}^{N}\alpha_ {nk} \leq 1, \ \  \forall k,\quad \\
                     &  \alpha_ {nk} \geq0, \ \  \forall n,k \\
                     &  \sum_{k=1}^K\alpha_{nk}u_{nk}= a_n, \ \forall n
                     \end{array}.
\end{equation}
Here, $p$ can be found by a binary search. The exact water level $L$  has to be  above $L_{p}+y$, but below the next level $L_{p+1}$. All players  with an index greater than $p$ will share  the extra utility  equally, and $y$ is the result of the following  linear programming problem
\begin{equation}
\label{Bequation}
\begin{array}{rl}
\textbf{max}    &  y \\
\textbf{subject to:} &  \sum_{n=1}^{N}\alpha_ {nk} = 1,\ \forall k,\quad \\
                     &  \alpha_ {nk} \geq0, \ \ \forall n,k, \\
                     &  \sum_{k=1}^K\alpha_{nk}u_{nk}= b_n, \forall n
                     \end{array},
\end{equation}
and $b_n$ is given by  
\beq
\label{Bterms}
b_n= \left\{
 \bea{ll}
 d_n +y&    p=0, \forall n \\
\frac{c_n}{2} +d_n&  1\leq n\leq p\\
\frac{c_p}{2}+d_n +y&  p< n \leq N\\
 \ena \right. .\eeq \\
The allocation to  player $n$ of asset  $k$ is  $\alpha_{nk}$, where $\left\{\alpha_{nk}\right\}$ is the solution to equation (\ref{Bequation}).
\textbf{Case B: The CEL rule}

All players lose at most  $\frac{c_n}{2}$. Let $p$ be  the smallest  $p$ such that  $\left\{ a_1,...,a_N\right\} \in \bm{S^{N}}$, where $a_n$ is given by  
\beq
\label{Aterms1}
a_n= \left\{
 \bea{ll}
\frac{c_n}{2}+ d_n &  n \leq p \\
c_N+d_N-\frac{c_p}{2} &   p<n \leq N\\
 \ena \right. .
 \eeq
Similar to  (\ref{Aequation}) with different values for the $a_n$'s, $p$ can be found by a binary search.

The exact water level has to be  above $L_{2N-p-1}$, but below the next level $L_{2N-p}$. All players  with an index  equal or greater than $p$ will share  the extra utility equally, and  $y$ is the result  to the linear programming problem in equation ({\ref{Bequation}}), where  $b_n$ is given by 
\beq
\label{Bterms1}
b_n= \left\{
 \bea{ll}
\frac{c_n}{2}+d_n &   n \leq p \\
c_N+d_N-\frac{c_p}{2} +y &  p <q n \leq N
 \ena \right. .
 \eeq
The allocation to  player $n$ of asset  $k$ is  $\alpha_{nk}$, where $\alpha_{nk}$ is the solution to equation (\ref{Bequation}).

\begin{table}
\caption{Talmud rule  bargaining solution for the multi-asset  case\label{Aumann2}}
{\begin{tabular}{||l||}
\hline \hline {\bf Initialization:}
Solve the linear programming  problem in \\
equation  (\ref{eq2}). If there is a solution  then go to B, \\
otherwise go to A. \\
\hline
\hline
A. All players gain at most  half of $\frac{c_n}{2}+d_n.$\\
1. Do a binary search to find the  larger   $p$ such that there \\
is a solution to the linear programming in (\ref{Aequation}), \\
and   $a_n$ are given in (\ref{Aterms})\\
2. Solve the linear programming in (\ref{Bequation}), \\
and   $b_n$ are given in (\ref{Bterms})\\
3. $b_n$ is the total utility that is allocated to player $n$.\\
The assets are allocated according to the allocation matrix $A$. \\
Exit\\
\hline
\hline
B. All players lose at most half of the debt.\\
1. Do a binary search to find the smallest  $p$ such that there \\
is a solution to the linear programming in (\ref{Aequation}), \\
and   $a_n$ are given in (\ref{Aterms1})\\
2. Solve the linear programming in (\ref{Bequation}), \\
and   $b_n$ are given in (\ref{Bterms1})\\
3. $b_n$ is the utility that is allocated to player $n$.\\
The assets are allocated according to the allocation matrix $\bm{A}$.\\
\hline \hline
\end{tabular}}
\end{table}
\begin{figure}[htbp]
\centering \epsfig{file=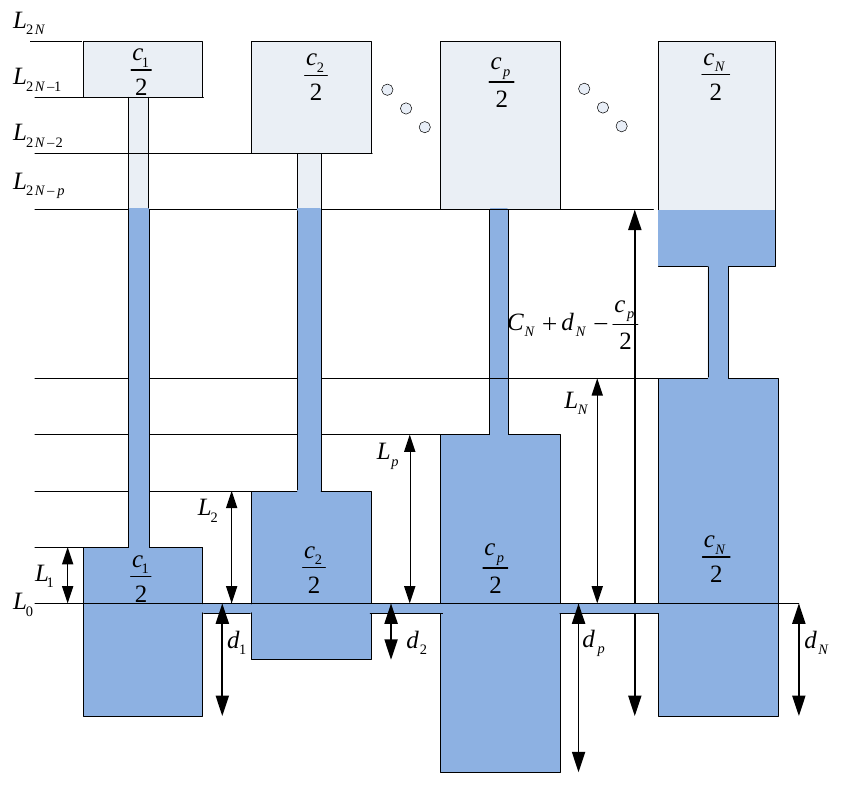,width=0.70 \textwidth}
\caption{Water-filling interpretation of the Talmud rule for bargaining solution.}
\label{aumann1_fig}
\end{figure}

 \textit{Example II}: Assume the same conditions as in Example I . The ideal point for player $1$ and player $2$ is $70$ and $160$, respectively, and both claim to get the utility of the ideal point. An allocation of half of the claim for each player is  also inside the set, $(L_1,L_2)=(35,80) \in\bm{V}$. The point $(L_1,L_3)=(35,125)$ is also inside $\bm{V}$, so the solution is to share the remaining utility equally between the players. Hence, we have to search for a point  $(u_1,u_2)=(35+y,125+y)$ that is on the {P}areto frontier of set $\bm{V}$, namely $s_2=150-\frac{5}{2}(s_1-30).$ Thus, point $(L_1+y,L_3+y)=(38\frac{4}{7},128\frac{4}{7})$ provides equal losses to both players.  Set $\bm{V}$ and the solution of the bargaining process are depicted in Table  \ref{3_players_example} and in Figure \ref{raiffaN_fig}. The allocation matrix in this case is 
\[\bm{A}=
\left(
\begin{array}{ccc}
0& \frac{3}{7}& 1\\
 1& \frac{4}{7} & 0  
\end{array}
\right).
\]

\newpage
\section{ Properties and complexity}
\label{sec:section4}

\subsection{ Properties of Raiffa and TR bargaining solutions}
In  this subsection we define some of the properties that relate to the bargaining solutions that described in the previous section. Here we adopt the 
defintions in \cite{thomson_2003} and  \cite{hinojosa2012}.

Let $\Pi_N$ denote the class of bijections from $\cal{N}$ to itself,  $\Pi^K$ denotes the class of bijections from $\cal{K}$ to itself. Let $\pi \in \Pi_ N$, and denote by $\bm{S_\pi}$ the matrix whose $k$th row is $s_{\pi(k)}$ for $n\in \cal{N}$. We also define  $\sigma \in \Pi_K$, and denote by $\bm{S^\sigma}$ the  $k$th column after column permutation. \\

\emph{Anonymity}: For each $\Pi_N$, and each $n\in \cal{N}$, if  $\varphi (\bm{ S,d,u})=\bm{A}$, then   $\varphi_{\pi}(\bm{ S_{\pi},d_{\pi},u_{\pi}})=\bm{A_\pi}$.

\emph{Neutrality}: For each $\Pi^K$, and each $k\in \cal{K}$, if  $\varphi (\bm{ S,d,u})=\bm{A}$, then   $\varphi^{\sigma}(\bm{ S^{\sigma},d^{\sigma},u^{\sigma}})=\bm{A^\sigma}$.

Anonymity and neutrality hold for both solutions. \\

\emph{Equal treatment of equal}: For a bargaining game $(\bm{S},\bm{d},\bm{u})$, and players $n,p\in \cal{N}$, if $I_n-d_n=I_p-d_p$, then $\varphi_n(\bm{S,d,u})\bm{u_n}^T-d_n=\varphi_p(\bm{S,d,u})\bm{u_p}^T-d_p$.\\

This property states that any two players that have the same claims ($c_n=I_n-d_n$) will get the same award. \\

\emph{Order preservation}: For a bargaining game $(\bm{S},\bm{d},\bm{u})$, and players $n,p\in \cal{N}$, if $I_n-d_n\geq I_p-d_p$, then $\varphi_n(\bm{S,d,u})\bm{u_n}^T-d_n \geq \varphi_p(\bm{S,d,u})\bm{u_p}^T-d_p$, and  $I_n-\varphi_n(\bm{S,d,u})\bm{u_n}^T \geq  I_p -\varphi_p(\bm{S,d,u})\bm{u_p}^T$\\

 \begin{lemma} 
 The TR bargaining solution satisfies equal treatment of equal and order preservation. 
 \end{lemma}
 \proof: 
 
The claims of all players  are calculated  once, according to the {RIP} rule at the initial phase of the algorithm. The water filling algorithm ensures  that the player with the smaller claim will gain (and lose)  less than the player with the larger claim, and  any two players that have the same claims  will get the same award based on the same argument. In the case of the Raiffa bargaining solution there is no guarantee that these properties will hold in each phase of the algorithm.

\emph{Homogeneity}. For  any  bargaining game $(\bm{S},\bm{d},\bm{u})$ and $\lambda>0$, $\varphi(\bm{\lambda S,\lambda d,\lambda u})=\bm{A}$.

\begin{lemma} 
 The TR bargaining solution and the Raiffa bargaining solution  satisfy homogeneity. 
 \end{lemma}
 
  \proof: 
 These bargaining solutions are based on solving the LP problems. In these problems,  $\lambda$ multiplies  both sides of the  constraints,  and therefore   will not change the solution of the optimization problems.\\
 
  Any point on the {P}areto frontier of set $\bm{S}$ can be obtained by assigning a proper weight vector  $\{w_1,\cdots,w_n\}$, and solving the corresponding weighted max-min optimization problem. \cite{zehavi_2013} proved that for a weighted max-min allocation problem  of $K$ assets to
$N$ players, there is always a result where  at most $N-1$ assets are  shared by more than one player. The Raiffa bargaining solution and the Talmud rule solution are {P}areto optimal solutions located on the {P}areto frontier. Therefore, in these solutions,  the number of assets that are shared by more than one player is at most $N-1$. Note  that if the number of assets is very large in comparison to the number of players, it is easy
to modify the allocation such that each player loses at most a single asset  it shares with others, and that this loss is small when $N>>K$.

Note that in the case of 2 players the Talmud rule solution always operates according to the constrained equal losses rule (due to the convexity of set $\bm{S}$). It is easy to show that the player with the larger claim gets more than in the Raiffa bargaining solution.  However when the number of players is greater than two, and the players are bargaining on a single asset, the CEA rule applies.
\subsection{ Complexity of the Raiffa and TR bargaining solutions}
Since the discovery of the Simplex Method in the 1940s, extensive work has been done on algorithms for solving  {Linear Programming} ({LP}).
Large numbers of optimization algorithms have been developed including variants on the {Simplex Method}, the {Ellipsoid Method},  and the  {Primal-Dual Interior-Point Method}.   \cite{nem} proved in 1979 that Linear Programming is polynomially solvable; namely, that an LP problem
with rational coefficients, $m$ inequality constraints and $n$ variables can be solved in $O(n^3(n+m)L)$
arithmetic operations, where $L $\footnote{ In our case  K(N-1)  coefficients of matrix A are zeros or ones;  thus, the value of $L$ is bounded by $O( K^2N)$, since $ L =\sum_{i,j} \log_2(a_{i,j}+1)+\log_2(nm)+(nm+m)=O( K^2N)$}
is the input
length of the problem; i.e., the total binary length of the numerical data specifying the problem instance. In our case (the primal dual problem) we have $n=K N+1$ variables and $m=K+N$ inequality constraints.  Note that the matrix in our case is almost unimodular, and sparse; thus the worst case complexity is on the order of $O(K^6N^5)$. Hence, the complexity of RBS is $O(JK^6N^6)$, where $J$ is the number of iterations, and the complexity of  the Talmud rule solution is $O(K^6N^5)$. In practice, the algorithms  converge faster than the worst case bound. A more extensive discussion on complexity can be found in  \cite{nem}.

\subsection{ Examples}
The  utility of an asset is not transferable;  therefore, the allocation matrix  depends highly  on the utility of each asset to the player. Table \ref{1case} presents a scenario with three players and  seven assets. Here, the utility of the assets for each player  is given in rows 3-5 of the table, and the ideal points of the players are $(24.2, 83.3, 103.2)$, respectively. The allocations of the assets  for each player  according to RBS  are in rows 7-9, and the allocations  according to TR are in rows 11-13. The final allocation for each player according to the RBS and TR are  given in rows 15-17. Similar  results  are shown for a different scenario in  Table \ref{2case}, where the ideal points  of the players are $(21.2, 51.4, 74)$, respectively. However, the allocations for each player in scenario II are higher than in scenario I, due to the fact that most of the players, utility  is concentrated in different assets (scenario II). In the case of more than two players, allocating at least half of the ideal points to all players is sometimes not feasible (scenario I). In this case,  the Talmud rule solution either allocates half of the claim to the player  with the weakest ideal point or  allocates the utility equally among all the players.

\begin{table}
\caption{Scenario I: Three players and  seven assets \label{1case}}{
\begin{tabular}{|c|c|c|c|c|c|c|c|}
\hline
\hline
Scenario I	& \multicolumn{7}{ |c| } { Utility of assets}\\
\hline
Player 	& 1&  2 &  3 & 4&  5 &  6& 7 \\
\hline
1	& 3.0    &4.7  &  2.3   & 8.4    & 1.9    & 2.2  &   1.7  \\
 \hline
 2  &8.7   & 6.2   & 18.4  &  8.6   &  3.7  & 18.1  & 19.6  \\
 \hline
 3 & 3.9    &9.0   &14.3  &20.8   & 9.2  & 21.1 &  24.9  \\
\hline
Player 	  	  & \multicolumn{7}{ |c| } {Raiffa- Commodities allocation}\\
\hline
 1   & 0   & 1   &  0 &   0.844   &  0 &   0&   0  \\
\hline
 2    & 1 &  0 &   1   & 0  &   0  &  0.58   &  0  \\
\hline
 3    &0  & 0   & 0   &  0.156  &  1  &   0.42   &  1\\
 \hline
Player 	   	  & \multicolumn{7}{ |c| } {Talmud rule  Commodities allocation}\\
	   \hline
1   &  0 & 1   &0&    0.88&   0&   0&   0  \\
 \hline
 2    & 1 &  0   &1& 0& 0&  0.78& 0 \\
 \hline
3  & 0       &0  &0 &0.12&    1& 0.22&    1 \\
  \hline
   \hline
       	  & \multicolumn{7}{ |c| } {Sum of utilities per player }\\
Player 	   	  & \multicolumn{3}{ |c| } {Raiffa }& \multicolumn{4}{ |c| } {Aumman }\\
	   \hline
 1   & \multicolumn{3}{ |c| } {11.7930  }& \multicolumn{4}{ |c| } {12.100}\\
 \hline
 2    & \multicolumn{3}{ |c| } {37.5995}& \multicolumn{4}{ |c| } {   41.218 }\\
 \hline
 3  & \multicolumn{3}{ |c| } {46.1966 }& \multicolumn{4}{ |c| } {   41.218}\\
\hline
\hline
\end{tabular}}
\end{table}

\begin{table}
\caption{Scenario II: Three players and  seven assets \label{2case}}{
\begin{tabular}{|c|c|c|c|c|c|c|c|}
\hline
\hline
Scenario II & \multicolumn{7}{ |c| } { Utility of assets}\\
\hline
Player 	& 1&  2 &  3 & 4&  5 &  6& 7 \\
\hline
1&8.4 &   8.7   & 3.0   & 0.1  &  0.2  &   0.5   &  0.3 \\
\hline
2 &    0.3  &  0.2  & 18.5    &12.1   &19.6  &  0.5  &  0.2 \\
\hline
 3&    0.2 &    0.7  & 10.5  &  0.1   & 1.0  & 31.1  & 30.4 \\
  \hline
Player 	    	  & \multicolumn{7}{ |c| } {Raiffa- Commodities allocation}\\
\hline
1&    1&    1&  0.126&    0&  0&   0&    0\\
    \hline
2&    0&   0&    0.792&    1&  1&   0&   0\\
    \hline
3&    0&    0&   0.082&    0&    0&    1&   1\\
 \hline
Player 	   	  & \multicolumn{7}{ |c| } {Talmud rule- Commodities  allocation}\\
  \hline	
1  &    1&    0.529&   0&    0&     0&    0&    0 \\
  \hline
 2&       0&    0&   0.622&    1&     1&  0&   0\\
   \hline
3&    0&    0.471&   0.378&     0&    0&     1&   1\\
  \hline
   \hline
       	  & \multicolumn{7}{ |c| } {Sum of utilities per player }\\
Player 	   	  & \multicolumn{3}{ |c| } {Raiffa }& \multicolumn{4}{ |c| } {Aumman }\\
	   \hline
 1   & \multicolumn{3}{ |c| } {17.4770  }& \multicolumn{4}{ |c| } {13.0017}\\
 \hline
 2    & \multicolumn{3}{ |c| } {46.3581} & \multicolumn{4}{ |c| } {43.2017}\\
 \hline
 3  & \multicolumn{3}{ |c| } {62.3611 }& \multicolumn{4}{ |c| } {65.8017}\\
 \hline
\hline
\end{tabular}}
\end{table}

\section{Summary and Conclusion}
\label{sec:conclusions}
The goal of this paper was to extend the  Raiffa bargaining solution and the Talmud rule  to  a multi-asset  game with N-players. We address a bargaining game problem where the utility of each asset is not transferable. The bargaining between the players is on the surplus utility that they can get above  the total utility of the disagreement point. We propose a bargaining model where the players' surplus utilities ($I_n-d_d$)  are  in  lexicographic
 order. This work  modifies a previous model where the claims were ordered according to the minimum utility of the  asset of each player. 
We show that global bargaining solutions can be obtained by solving a sequence of linear programming problems.  The complexity of the solution for RBS with $M$ assets is equivalent to solving $N$ linear programming problems in each step, and the unallocated utilities decrease by a factor of $e$ in  each step. The complexity of the solution for the TR solution  is equivalent to solving $log(2N)$ linear programming problems.  

\section{Appendix}
\begin{appendix}
\section{The linear programming  solution for 2 players}
For the two player case the linear programming
problem  can be dramatically simplified, and we
provide an $O(K \log_2 K)$ complexity algorithm ($K$ is the number of assets). We show that the two players share
at most a single commditiy, regardless of the ratio between the users.
To that end let,
$\alpha_{1k}=\alpha_k$, and
$\alpha_{2k}=1-\alpha_k$.

We want to solve the following optimization problem:

\beq
\begin{array}{rcl}
L(\vga, \vgd, \bm\mu, \vgl)&= &-\displaystyle \sum_{k=1}^K(1-\alpha_{k})u_{2k}-\sum_{k=1}^K\mu_{k}\alpha_{k}\\
&&+\lambda\left(\sum_{k=1}^K\alpha_{k}u_{1k}-u_1\right) .
\end{array}
\eeq

To better understand the problem, we first derive the KKT conditions \cite{boyd04}.
Taking the derivative with respect to  $\alpha_{k}$, we obtain
 \beq
\label{KKT1}
 \begin{array}{c}
u_{2k}+\lambda u_{1k}- \mu_{k}=0.\
\end{array}
\eeq
with the complementarity conditions:
\beq
 \begin{array}{cl}
 \label{KKT2}
1.& \sum_{k=1}^K\alpha_{1k} u_{1k}=u_1,\\
2.&  \mu_{k}\alpha_{k}=0, \ \mu \geq0.  
\end{array}.
\eeq
Based on (\ref{KKT1})-(\ref{KKT2}), we can easily see that the Lagrange multipliers in  (\ref{KKT2}) satisfy the following conclusions :
\begin{itemize}
\item[1.] $\mu_{nk}= 0$,  if $\alpha_k>0, \ \forall\ k$ (see (\ref{KKT2}.2)).
\item[2.]   If $0<\alpha_{1k}^j<1$, then the players share an asset  $p$ if $\frac{u_{2p}}{u_{1p}}=-\lambda$ (see (\ref{KKT1}.2)).
\item[3.]   Asset $p$ is assigned to player $2$ if  $\frac{u_{2p}}{u_{1p}}>-\lambda$.
\item[4.]   Asset $p$ is assigned to player $1$ if  $\frac{u_{2p}}{u_{1p}}<-\lambda$.
\item[5.]   $\sum\alpha_{1k}u_{1k}=u_1$.
\end{itemize}
Assuming that a feasible solution exists and that the assets are  ranked in decreasing order according to the ratio
$L\left(k\right)=\frac{u_{1k}}{u_{2k}}$, it follows from  the KKT conditions that the allocation is made according to
the following rules
\begin{enumerate}
\item  The ideal point of player 1 is $I_1$ given by
\beq
\begin{array}{lcl}
I_1(u_2)&=&\sum_{k=1}^{p-1}u_{1k}+\alpha_pu_{1p},
  \end{array}
\eeq where $p$ and $\alpha_p$ are set such
that \beq \label{kmaxd}
u_{2}=\sum_{k=p}^{K}u_{2k}-\alpha_p u_{2p}.
\eeq
\item  Similarly, the ideal point of player 2 is $I_2(u_1)$ is given by
\beq
\begin{array}{lcl}
 I_2(u_1)& =&\sum_{k=p}^{K}u_{2k}-\alpha_pu_{2p},
\end{array}
\eeq where $p$ and $\alpha_p$ are set such
that \beq \label{kmind}
u_{1}=\sum_{k=1}^{p-1}u_{1k}+\alpha_pu_{1p}.
\eeq
\end{enumerate}
Therefore, no more than one asset  can  be shared by the two players. The algorithm for computing the ideal point of player $1$ is as follows.  Let $L_k=\frac{u_{1k}}{u_{2k}}$ be
the ratio between the utilities of asset  $k$. We can sort the
assets in decreasing  order according to $L_K$.  If all
the values of $L_{k}$ are distinct, there is at most a  single
asset  that has to be shared between the two players. Since
only one  asset   satisfies equation (\ref{kmind}), we denote this
asset $k_s$, and  all the assets $1\leq k<k_s$
will  be allocated to player $1$, while all the assets $k_s <
k \leq K$ will be be allocated to player $2$. Asset  $k_s$ must
be shared accordingly between the players. The  complexity of this algorithm is at most $O(K\log K)$, due to the sorting operation. For the Raiffa bargaining solution  the sorting  operation only has to be done once at the beginning. The complexity of computing the next disagreement point is on the order of $O(K)$.
\end{appendix}


\bibliographystyle{elsarticle-num}

\end{document}